\author{\authorblockN{Shiyang Leng\authorrefmark{4}, Derrick Wing Kwan Ng\authorrefmark{1}, Nikola Zlatanov\authorrefmark{2}, and Robert Schober\authorrefmark{3}\thanks{Robert Schober is also with the University of British Columbia. This work was supported in part by the AvH Professorship Program of the Alexander von Humboldt Foundation.}}
The Pennsylvania State University, USA\authorrefmark{4}\\
The University of New South Wales, Australia\authorrefmark{1}\\
Monash University, Australia\authorrefmark{2}\\
Friedrich-Alexander-University Erlangen-N\"urnberg (FAU), Germany\authorrefmark{3}\\}
\title{Multi-Objective Resource Allocation in Full-Duplex SWIPT Systems}
\date{\thistime,\,\today}
\newtheorem{Thm}{Theorem}
\newtheorem{Prob}{Problem}
\newtheorem{T-Prob}{Transformed Problem}
\DeclareMathOperator{\Tr}{\mathrm{Tr}}
\DeclareMathOperator{\zero}{\mathbf{0}}
\DeclareMathOperator{\Rank}{\mathrm{Rank}}
\DeclareMathOperator{\diag}{\mathrm{diag}}
\DeclareMathOperator{\maxo}{\mathrm{maximize}}
\DeclareMathOperator{\mino}{\mathrm{minimize}}
 \newcommand{\qed}{\hfill \ensuremath{\blacksquare}}
\newcommand{\abs}[1]{\lvert#1\rvert}
\newcommand{\norm}[1]{\lVert#1\rVert}
\begin{document}

\maketitle

\begin{abstract}
In this paper, we investigate the resource allocation algorithm design for full-duplex simultaneous wireless information and power transfer (FD-SWIPT) systems. The considered system comprises a FD radio base station, multiple single-antenna half-duplex (HD) users, and multiple energy harvesters equipped with multiple antennas. We propose a multi-objective optimization framework to study the trade-off between uplink transmit power minimization, downlink transmit power minimization, and total harvested energy maximization. The considered optimization framework takes into account heterogeneous quality of service requirements for uplink and downlink communication and wireless power transfer. The non-convex multi-objective optimization problem is  transformed into an equivalent rank-constrained semidefinite program (SDP) and solved optimally by  SDP relaxation under certain general conditions. The solution of the proposed framework results in a set of Pareto optimal resource allocation policies.  Numerical results unveil an interesting trade-off between the considered
conflicting system design objectives and reveal the improved power efficiency facilitated by FD in SWIPT systems compared to traditional HD systems.
\end{abstract}
\renewcommand{\baselinestretch}{0.93}
\normalsize
\section{Introduction}
Next generation communication systems aim at
providing self-sustainability  and high data rates to communication networks with guaranteed
quality of service (QoS). A promising technique for prolonging the lifetime of communication networks is energy harvesting (EH). Among different EH technologies, wireless power transfer (WPT) via electromagnetic waves in radio frequency (RF) enables comparatively controllable EH at the receivers compared to conventional natural source, such as wind, solar, and tidal. Recent progress in the development of RF-EH circuitries has made RF-EH practical for  low-power consumption devices \cite{Krikidis2014}\nocite{JR:Kai_bin_magazine}--\cite{JR:Xiaoming_magazine}, e.g. wireless sensors. In particular, RF-EH enables  simultaneous wireless information and power transfer (SWIPT) \cite{CN:WIPT_fundamental}\nocite{JR:WIP_receiver,JR:non_linear_model}--\cite{JR:Rui_MISO_beamforming}. Thereby, as a carrier of both information and energy, the RF signal unifies information transmission and power transfer. Besides, RF-EH advocates energy saving by recycling the energy in the RF radiated by ambient transmitters. In SWIPT systems, the amount of harvested energy is an equally important QoS metric as the data rate and the transmit power consumption which are traditionally considered in communication networks. Thus, resource allocation algorithms for SWIPT systems should also take into account the emerging need for energy transfer \cite{JR:WIPT_fullpaper_OFDMA}--\nocite{JR:EE_SWIPT_Massive_MIMO}\cite{JR:MOOP}. In \cite{JR:WIPT_fullpaper_OFDMA}, energy-efficient SWIPT was investigated in multicarrier systems, where power allocation, user scheduling, and subcarrier allocation were considered.  In \cite{JR:EE_SWIPT_Massive_MIMO}, the authors proposed a power allocation scheme for  energy efficiency maximization of large scale multiple-antenna SWIPT systems. In \cite{JR:MOOP}, multi-objective optimization (MOO) was applied to jointly optimize multiple system design objectives to facilitate secure SWIPT systems. Although SWIPT has been already considered for various system setups, the power efficiency of  SWIPT systems \cite{JR:WIPT_fullpaper_OFDMA}--\nocite{JR:EE_SWIPT_Massive_MIMO}\cite{JR:MOOP}, has not been fully investigated and is still unsatisfactory due to the traditional half-duplex (HD) operation.

Recently, full-duplex (FD) communication has become a viable option for next generation wireless communication networks.
In contrast to  conventional HD transmission, FD communication  allows devices to transmit and receive simultaneously on the same frequency, thus potentially doubling the spectral efficiency. In practice,   the self-interference (SI)  caused by the own transmit signal impairs the simultaneous signal reception in FD systems severely which has been a major obstacle for the implementation of FD devices  in the past decades.  Fortunately,  breakthroughs in analog and digital self-interference cancellation (SIC) techniques \cite{FD_radios} have made FD communication more practical in recent years.  However, various  practical implementation issues, such as protocol and resource allocation algorithm design, need to be reinvestigated in the context of FD communications \cite{spec_eff_fd_smallcell}--\nocite{ofdma_fd_relay,CN:resall_fd_relay}\cite{CN:mulobj_fd_uldl}.
In \cite{spec_eff_fd_smallcell}, the authors proposed a suboptimal beamformer design to maximize the spectral efficiency of FD small cell wireless systems. In \cite{ofdma_fd_relay}, resource allocation and scheduling was studied for FD multiple-input multiple-output orthogonal frequency division multiple access (MIMO-OFDMA) relaying systems. Moreover, the energy efficiency of FD-OFDMA relaying systems was investigated in \cite{CN:resall_fd_relay}.  The authors of \cite{CN:mulobj_fd_uldl} proposed a multi-objective resource allocation algorithm for FD systems by considering the trade-off between uplink and downlink transmit power minimization. Although FD communication has  drawn  significant
research  interest \cite{spec_eff_fd_smallcell}--\nocite{ofdma_fd_relay,CN:resall_fd_relay}\cite{CN:mulobj_fd_uldl}, research on FD SWIPT systems is still in its infancy. Lately, the notion of FD  communication in EH
systems has  been pursued. Specifically, the combination of  FD and WPT was first considered in \cite{fd_wirlesspowered}. The authors optimized the resource allocation in a system with  WPT in the downlink and wireless information transmission in the uplink. In \cite{JR:Caijun_FD_SWIPT_relay}, the performance of a  dual-hop full-duplex relaying SWIPT system was studied. However,  simultaneous uplink and downlink communication has not been studied thoroughly for SWIPT systems. In fact, uplink and downlink transmission occurs simultaneously in FD systems and the associated information signals can also serve as vital energy sources for RF energy harvesting.  As a result, different trade-off naturally arise  in FD-SWIPT systems  when considering the aspects of uplink and downlink transmission as well as EH.  These observations motivate us to design a flexible resource allocation algorithm for FD SWIPT systems which strikes a balance between the different system design objectives.

The rest of the paper is organized as follows. In Section \ref{sect:OFDMA_AF_network_model},
we outline the system model for the considered FD SWIPT networks. In Section \ref{sect:forumlation}, we formulate the
multi-objective resource allocation algorithm design as a non-convex optimization
problem and solve this problem by  semidefinite programming relaxation. In Section \ref{sect:simulation}, we present numerical performance
results for the proposed optimal  algorithm. In Section \ref{sect:conclusion},
we conclude with a brief summary of our results.


\section{System Model}\label{sect:OFDMA_AF_network_model}
In this section, we first introduce the notation adopted in this paper. Then, we discuss the signal model for FD SWIPT networks.
\subsection{Notation}
We use boldface capital and lower case letters to denote matrices and vectors, respectively. $\mathbf{A}^H$, $\Tr(\mathbf{A})$, and $\Rank(\mathbf{A})$ represent the  Hermitian transpose, trace, and rank of  matrix $\mathbf{A}$, respectively;
$\diag(\mathbf{A})$ returns a diagonal matrix containing the diagonal elements of matrix $\mathbf{A
}$ on its main diagonal; $\mathbf{A}^{-1}$ and $\mathbf{A}^{\dagger}$ represent the inverse and Moore-Penrose pseudoinverse of matrix $\mathbf{A}$, respectively; $\mathbf{A}\succeq \mathbf{0}$ indicates that $\mathbf{A}$ is a positive semidefinite matrix; $\mathbf{I}_N$ is the $N\times N$ identity matrix; $\mathbb{C}^{N\times M}$ denotes the set of all $N\times M$ matrices with complex entries; $\mathbb{H}^N$ denotes the set of all $N\times N$ Hermitian matrices; $\abs{\cdot}$ and $\norm{\cdot}$ denote the absolute value of a complex scalar and the Euclidean vector norm, respectively; ${\cal E}\{\cdot\}$ denotes statistical expectation;  $[x]^+=\max\{x,0\}$; the circularly symmetric complex Gaussian distribution with mean vector $\boldsymbol{\mu}$ and covariance matrix $\boldsymbol{\Sigma}$ is denoted by ${\cal CN}(\boldsymbol{\mu},\boldsymbol{\Sigma})$; and $\sim$ stands for ``distributed as".
\begin{figure}[t]
\centering
\includegraphics[width=3.5in]{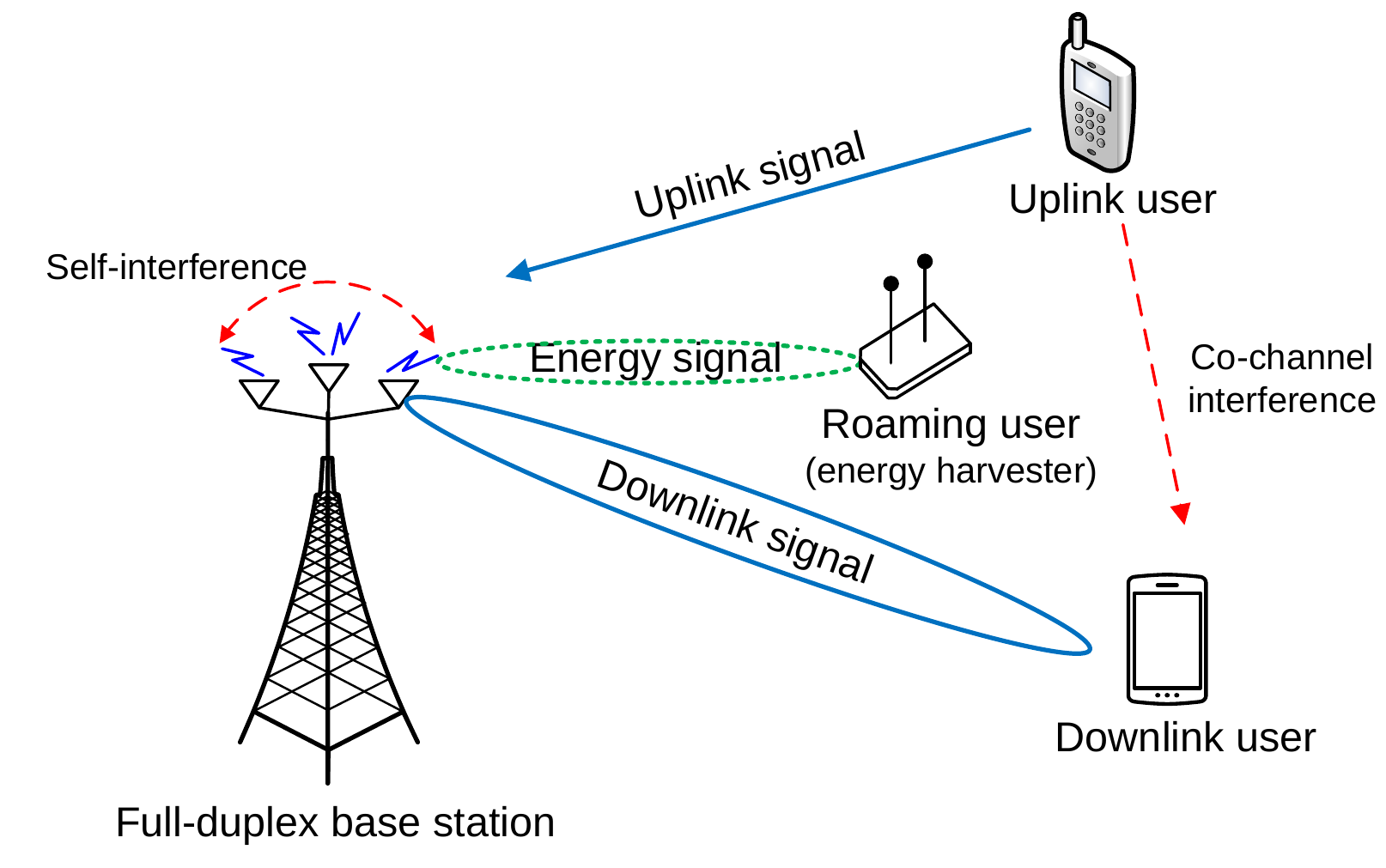}
\caption{Multiuser FD SWIPT system with a FD radio base station, $M=1$ uplink user, $K=1$ downlink user, and $J=1$ roaming user (energy harvester).}
\label{fig:system_model}
\end{figure}
\subsection{Signal Model}
We focus on a multiuser wireless communication system. The system consists of an FD radio base station (BS), $K$ HD downlink users, $M$ HD uplink users, and $J$ roaming users, cf. Figure \ref{fig:system_model}. The BS is equipped with $N>1$ antennas that can simultaneously perform downlink transmission and uplink reception in the same frequency band \cite{FD_radios}. All uplink and downlink users are single-antenna devices to limit the hardware complexity. On the other hand, to facilitate efficient EH, we assume that the roaming users are multiple-antenna devices, which are equipped with $N^\mathrm{EH}>1$ antennas.

 For downlink FD communication, $K$ independent signal streams are transmitted simultaneously at the same frequency to the $K$ downlink users. The transmitted signal at the FD radio BS is given by
\begin{eqnarray}\label{eqn:transig_BS}
\mathbf{x}=\sum_{k=1}^K\mathbf{w}_kd_k^\mathrm{DL}+\mathbf{q},
\end{eqnarray}
where $d_k^\mathrm{DL}\in\mathbb{C}$ is the information bearing signal intended for downlink user $k\in\{1,\ldots,K\}$. Without loss of generality, we assume ${\cal E}\{\abs{d_k^\mathrm{DL}}^2\}=1$. Besides, a beamforming vector $\mathbf{w}_k\in\mathbb{C}^{N\times1}$ is employed to assist downlink information transmission. On the other hand, in order to facilitate efficient WPT\footnote{ In this paper, we adopt a normalized unit energy, i.e., Joule-per-second. Thus, the terms ``energy" and ``power" are interchangeable.} to roaming users, a dedicated energy beam, $\mathbf{q}\in\mathbb{C}^{N\times1}$, is transmitted concurrently with the information signal. The energy signal $\mathbf{q}$ is modeled as a complex pseudo-random sequence with covariance matrix $\mathbf{Q}={\cal E}\{\mathbf{q}\mathbf{q}^H\}$. In general, both pseudo-random signals and constant amplitude signals are potential candidates for implementing the energy signal. However, pseudo-random energy signals can be shaped more easily to satisfy certain requirements on the spectrum mask of the transmit signal and are thus adopted in this paper. In particular, we assume that $\mathbf{q}$ is generated at the BS by a pseudo-random sequence generator with a predefined seed. This seed information is known at the downlink users. Thus, the interference caused by the energy signal can be completely cancelled at the downlink users before decoding the desired signals.

\subsection{Channel Model}
We consider a narrow-band slow fading channel. The received signal at downlink user $k$ is given by
\begin{eqnarray}\label{eqn:recesig_BS}
y_k^\mathrm{DL}=\mathbf{h}_k^H\mathbf{x}+\underbrace{\sum_{m=1}^M\sqrt{P_m}f_{m,k}d_m^\mathrm{UL}}_{\text{co-channel interference}}+n_k^\mathrm{DL},
\end{eqnarray}
where $\mathbf{h}_k\in\mathbb{C}^{N\times1}$ denotes the channel vector between the BS and downlink user $k$. The second term in (\ref{eqn:recesig_BS}) denotes the co-channel interference (CCI) caused by simultaneous uplink transmission in the FD system. $f_{m,k}\in\mathbb{C}$ is the channel gain between uplink user $m$ and downlink user $k$. $d_m^\mathrm{UL}$ and $P_m$ denote the uplink transmit signal from uplink user $m$ and the corresponding transmit power, respectively. We assume ${\cal E}\{\abs{d_m^\mathrm{UL}}^2\}=1$ without loss of generality. $n_k^\mathrm{DL}\sim{\cal CN}(0,\sigma_{\mathrm{DL},k}^2)$ denotes the additive white Gaussian noise (AWGN) at downlink user $k$.

At the same time, the FD BS receives signals from $M$ uplink users simultaneously. The corresponding received signal is given by
\begin{eqnarray}
\mathbf{y}^\mathrm{UL}&=&\sum_{m=1}^M\sqrt{P_m}\mathbf{g}_md_m^{\mathrm{UL}}\notag\\
&+&
\underbrace{\mathbf{c}}_{\text{self-interference cancellation noise}}+\mathbf{n}^\mathrm{UL},
\end{eqnarray}
where $\mathbf{g}_m\in\mathbb{C}^{N\times1}$ denotes the channel vector between uplink user $m$ and the BS.  Vector $\mathbf{n}^\mathrm{UL}\in\mathbb{C}^{N\times1}$ represents  the AWGN distributed as ${\cal CN}(\mathbf{0},\sigma_\mathrm{UL}^2\mathbf{I}_{N})$.  Due to the concurrent uplink reception and downlink  transmission at the FD radio BS, the SI caused by the downlink transmit signal impairs the uplink signal reception. In practice, different interference mitigation techniques  such as antenna cancellation, balun cancellation, and circulators \cite{CN:standford_FD,JR:FD_model} have been proposed to alleviate the impairment caused by SI. In order to isolate the resource allocation algorithm design from the specific implementation of self-interference mitigation, we model the self-interference cancellation induced noise by vector $\mathbf{c}\sim {\cal CN}(\zero,\varrho\diag({\cal E}\{\mathbf{H}_{\mathrm{SI}}(\mathbf{x}\mathbf{x}^H)\mathbf{H}_{\mathrm{SI}}^H\}))$ \cite[Eq. (4)]{JR:FD_model}, where
 $\mathbf{H}_{\mathrm{SI}}\in\mathbb{C}^{N\times N}$ is the self-interference channel and  $0\le\varrho\ll 1$ is a constant indicating the noisiness of the self-interference cancellation at the FD BS.

In the considered system, both downlink and uplink signals\footnote{In general, the adopted system model can be extended to scenarios in which the uplink users also transmit energy signal to facilitate EH. However, it may increase the peak-to-average power ratio and is not suitable for uplink users equipped with low cost power amplifiers.} act as energy sources to the roaming users (energy harvesters). The received signal at energy harvester $j\in\{1,\dots,J\}$ is
\begin{eqnarray}
\mathbf{y}_j^{\mathrm{EH}}=\bm{\Omega}_j^H\mathbf{x}+\sum_{m=1}^M\bm{\phi}_{j,m}\sqrt{P_m}d_m^\mathrm{UL}+\mathbf{n}_j^\mathrm{EH},
\end{eqnarray}
where matrix $\bm{\Omega}_j\in\mathbb{C}^{N\times N^\mathrm{EH}}$ and vector $\bm{\phi}_{j,m}\in\mathbb{C}^{N^\mathrm{EH}\times1}$ denote the channel between the BS and energy harvester $j$, and the channel between uplink user $m$ and energy harvester $j$, respectively. Vector $\mathbf{n}_j^\mathrm{EH}\in\mathbb{C}^{N^\mathrm{EH}\times1}$ represents the AWGN at energy harvester $j$ distributed as ${\cal CN}(\mathbf{0},\sigma_\mathrm{EH}^2\mathbf{I}_{N_\mathrm{R}})$.

We note that all channel variables, i.e., $\mathbf{h}_k$, $f_{m,k}$, $\mathbf{g}_m$, $\mathbf{H}_\mathrm{SI}$, $\bm{\Omega}_j$, and $\bm{\phi}_{j,m}$, capture the joint effect of path loss and small scale fading.

\section{Problem Formulation}\label{sect:forumlation}
In this section, we first introduce the adopted QoS metrics. Then, from the perspectives of uplink power consumption, downlink power consumption, and EH, we formulate three single objective optimization problems. In practice, these three system design objectives are all desirable but conflicting. Thus, we apply a MOO framework to study multi-objective resource allocation algorithm design.

\subsection{Quality of Service Metrics}
We assume that full channel state information (CSI) is available  at the FD BS for resource allocation. The receive signal-to-interference-plus-noise-ratio (SINR) at downlink user $k$ is given by
\begin{eqnarray}
\Gamma_k^\mathrm{DL}=\frac{\abs{\mathbf{h}_k^H\mathbf{w}_k}^2}{\displaystyle\sum_{i\neq k}^K\abs{\mathbf{h}_k^H\mathbf{w}_i}^2+\sum_{m=1}^MP_m\abs{f_{m,k}}^2+\sigma_{\mathrm{DL},k}^2},
\end{eqnarray}
where the interference from the energy beamforming signal, i.e., $\Tr(\mathbf{h}_k^H\mathbf{Q}\mathbf{h}_k)$, has already been cancelled since energy signal $\mathbf{q}$ is known to the downlink users.

For uplink transmission, we adopt zero-forcing beamforming (ZF-BF) for detection at the BS. In contrast to  optimal minimum mean square error beamforming (MMSE-BF) detection, ZF-BF facilitates the design of resource allocation algorithms in the considered problem. Additionally, the performance of ZF-BF converges to the performance of MMSE-BF in the  high SINR regime  \cite{book:wireless_comm}, which is the desired operating region of the considered system. Therefore, the receive SINR at the BS with respect to uplink user $m\in\{1,\dots,M\}$ can be expressed as
\begin{eqnarray}
\Gamma_m^\mathrm{UL}\hspace*{-1mm}=\hspace*{-1mm}\frac{P_m\abs{\mathbf{g}_m^H\mathbf{z}_m}^2}{\displaystyle\sum_{i\neq m}^MP_i\abs{\mathbf{g}_i^H\mathbf{z}_m}^2\hspace*{-1mm}+S^{\mathrm{UL}}_m
+\sigma_\mathrm{UL}^2\norm{\mathbf{z}_m}^2},
\end{eqnarray}
where
\begin{eqnarray}
S^{\mathrm{UL}}_m=\varrho\mathbf{z}_m^H\Big(\diag\big(\mathbf{H}_\mathrm{SI}\big(\sum_{k=1}^K\mathbf{w}_k\mathbf{w}_k^H\hspace*{-1mm}
+\mathbf{Q}\big)\mathbf{H}_\mathrm{SI}^H\big)\Big)\mathbf{z}_m
\end{eqnarray}
is the noise caused by SI cancellation and $\mathbf{z}_m\in\mathbb{C}^{N\times1}$ denotes the ZF-BF receive vector for decoding the signal of uplink user $m$. The ZF-BF matrix is given by
\begin{eqnarray}
\mathbf{Z}&=&[\mathbf{z}_1,\dots,\mathbf{z}_M]^T=(\mathbf{G}^H\mathbf{G})^{-1}\mathbf{G}^H,\\
\text{where}\quad\mathbf{G}&=&[\mathbf{g}_1,\dots,\mathbf{g}_M].\notag
\end{eqnarray}

On the other hand, the total amount of harvested energy at energy harvester $j\in\{1,\dots,J\}$ is given by
\begin{eqnarray}\label{eqn:P_harv}
P_j^\mathrm{EH}\hspace*{-1mm}=\hspace*{-1mm}\eta_j\bigg[\hspace*{-0.5mm}\Tr\hspace*{-0.5mm}
\Big(\hspace*{-0.5mm}\bm{\Omega}_j^H(\hspace*{-0.5mm}\sum_{k=1}^K\hspace*{-0.5mm}\mathbf{w}_k\mathbf{w}_k^H\hspace*{-0.5mm}+\hspace*{-0.5mm}\mathbf{Q})\bm{\Omega}_j\hspace*{-0.5mm}\Big)
\hspace*{-1mm}+\hspace*{-2mm}\sum_{m=1}^M\hspace*{-1mm}P_m\norm{\bm{\phi}_{j,m}}^2\hspace*{-0.5mm}\bigg],
\end{eqnarray}
where $0\leq\eta_j\leq1$ is the energy conversion efficiency of energy harvester $j$. It represents the energy loss in converting the received RF energy to electrical energy for storage. Note that the thermal noise power  is ignored in (\ref{eqn:P_harv}) for EH as it is negligibly  small compared to the power of the received signals.

\subsection{Optimization Problem Formulation}
In  FD SWIPT systems, downlink transmit power minimization, uplink transmit power minimization, and total harvested energy maximization are all desirable system design objectives. Now,  we first propose three single-objective optimization problems with respect to these objectives.
\vspace*{1mm}
\begin{Prob}\emph{Downlink Transmit Power Minimization:}\label{prob:dl_mino}
\vspace*{-1mm}
\begin{eqnarray}\label{eqn:dl_mino}
&&\underset{\mathbf{Q}\in\mathbb{H}^{N},\underline{\mathbf{w}},{\cal P}}{\mino}\,\, \sum_{k=1}^K\norm{\mathbf{w}_k}^2+\Tr(\mathbf{Q})\notag\\
\mathrm{s.t.}&&\mathrm{C1}:\,\,\sum_{k=1}^K\norm{\mathbf{w}_k}^2+\Tr(\mathbf{Q})\leq P_{\mathrm{max}}^\mathrm{DL},\notag\\
&&\mathrm{C2}:\,\,P_m\leq P_{\mathrm{max},m}^\mathrm{UL},\,\,\forall m,\notag\\
&&\mathrm{C3}:\,\,\Gamma_k^\mathrm{DL}\ge\Gamma_{\mathrm{req},k}^\mathrm{DL},\,\,\forall k,\notag\\
&&\mathrm{C4}:\,\,\Gamma_m^\mathrm{UL}\ge\Gamma_{\mathrm{req},m}^\mathrm{UL},\,\,\forall m,\notag\\
&&\mathrm{C5}:\,\,P_j^\mathrm{EH}\ge P_{\mathrm{min},j},\,\,\forall j,\notag\\
&&\mathrm{C6}:\,\,P_m\ge0,\,\,\forall m,\quad\mathrm{C7}:\,\,\mathbf{Q}\succeq\mathbf{0},
\end{eqnarray}
\end{Prob}
where $\underline{\mathbf{w}}=\{\mathbf{w}_k,\forall k\}$ and ${\cal P}=\{P_m,\forall m\}$ denote the downlink beamforming vector policy and the uplink transmit power policy, respectively.
In  \eqref{eqn:dl_mino}, we minimize the total downlink transmit power by jointly optimizing downlink information beamforming vectors $\mathbf{w}_k, \forall k$, the covariance matrix of energy signal, $\mathbf{Q}$, and uplink transmit power $P_m,\forall m$. Constants $P_{\mathrm{max}}^\mathrm{DL}$ and $P_{\mathrm{max},m}^\mathrm{UL}$ in C1 and C2 denote the maximum downlink transmit power for the FD BS and the maximum transmit power of uplink user $m$, respectively. QoS requirements of reliable communication are taken into account in  C3 and C4. In particular, $\Gamma_{\mathrm{req},k}^\mathrm{DL}>0, \forall k,$ and $\Gamma_{\mathrm{req},m}^\mathrm{UL}>0, \forall m,$ are the minimum required SINRs for the downlink and uplink users, respectively. $P_{\mathrm{min},j}, \forall j,$ in C5 is the minimum required amount of harvested energy for energy harvester $j$. In addition, C6 and C7 enforce the non-negative uplink transmit power constraints and the positive semidefinite Hermitian matrix constraint for covariance matrix $\mathbf{Q}$, respectively.

On the other hand, for the system designs with the objectives of uplink transmit power minimization and total harvested energy maximization, respectively, we have the same constraint set as for Problem 1. Therefore, the problem formulations for these two other system design objectives are given as, respectively,
\vspace*{1mm}
\begin{Prob}\emph{Uplink Transmit Power Minimization:}\label{prob:ul_mino}
\vspace*{-1mm}
\begin{eqnarray}
\underset{\mathbf{Q}\in\mathbb{H}^{N},\underline{\mathbf{w}},{\cal P}}{\mino}\,\,\sum_{m=1}^MP_m\notag\\
\mathrm{s.t.}\quad\mathrm{C1}-\mathrm{C7},
\end{eqnarray}
\end{Prob}
\begin{Prob}\emph{Total Harvested Energy Maximization:}\label{prob:eh_maxo}
\vspace*{-1mm}
\begin{eqnarray}
\underset{\mathbf{Q}\in\mathbb{H}^{N},\underline{\mathbf{w}},{\cal P}}{\maxo}\,\,\sum_{j=1}^JP_j^\mathrm{EH}\notag\\
\mathrm{s.t.}\quad\mathrm{C1}-\mathrm{C7}.
\end{eqnarray}
\end{Prob}
The interdependency between the aforementioned objectives is non-trivial in the considered FD SWIPT system. For instance,   although a large transmit power ensures high received SINRs at the downlink users,  the strong SI impairs the reception of the uplink signals at the FD BS. Similarly, increasing the uplink transmit power to satisfy a more stringent uplink SINR requirement will lead to severe CCI which degrades the downlink signal reception. On the other hand, the EH QoS requirement has to be fulfilled by transferring a sufficient amount of power in both uplink and downlink. Yet, minimizing either uplink or downlink transmit power conflicts with the objective of having a higher power for EH. Hence, a non-trivial trade-off between these three system design objectives naturally arises in the considered FD SWIPT system. Thus, a flexible resource allocation algorithm which can accommodate diverse system design preferences is desired. To this end, we apply MOO to  systematically address this resource allocation problem.

In the literature, MOO is commonly adopted as a mathematical framework to study the trade-off between multiple desirable but conflicting system design objectives. The optimal solution of a MOO program (MOOP) is defined by a Pareto optimal set; a set of points that satisfy the concept of Pareto optimality \cite{JR:MOOP}. In the following, we formulate a MOOP based on the weighted Tchebycheff method \cite{JR:MOOP}, in which the preferences for the aforementioned single system design objectives are quantified by a set of pre-specified weights. In fact, compared to other approaches to formulate MOOPs, the weighted Tchebycheff method can provide a complete Pareto optimal set by varying the weights, even if the MOOP is non-convex. For the sake of notational simplicity, we denote the objective functions of Problems 1--3 as $F_n(\mathbf{Q},\underline{\mathbf{w}},{\cal P})$, $n\in\{1,2,3\}$, respectively. Then, the MOOP is given by
\vspace*{1mm}
\begin{Prob}\emph{Multi-Objective Optimization:}\label{prob:mul_obj}\vspace*{-1mm}
\begin{eqnarray}
\underset{\mathbf{Q}\in\mathbb{H}^{N},\underline{\mathbf{w}},{\cal P}}{\mino}&& \max_{n=1,2,3}\,\, \Big\{\lambda_n\Big(F_n(\mathbf{Q},\underline{\mathbf{w}},{\cal P})-F_n^*\Big)\Big\}\notag\\
\mathrm{s.t.}&&\mathrm{C1}-\mathrm{C7},
\end{eqnarray}
\end{Prob}
where $F_n^*$ is the optimal objective value with respect to Problem $n\in\{1,2,3\}$. In order to represent the three single system design objective functions in a unified manner, without loss of generality, the maximization in {Problem \ref{prob:eh_maxo}} was rewritten as an equivalent minimization. As a result, $F_3(\mathbf{Q},\underline{\mathbf{w}},{\cal P})$ in Problem \ref{prob:mul_obj} is given by $F_3(\mathbf{Q},\underline{\mathbf{w}},{\cal P})=-\sum_{j=1}^JP_j^\mathrm{EH}$. Constant  $\lambda_n$ is a weight imposed on the $n$-th objective function subject to $0\leq\lambda_n\leq1$ and $\sum_n\lambda_n=1$, which indicates the preference of the system designer for the $n$-th objective function over the others. We can obtain a set of resource allocation policies by solving {Problem \ref{prob:mul_obj}} for different predefined weights. In the extreme case, when $\lambda_n=1$ and $\lambda_l=0, \forall l\neq n $, {Problem \ref{prob:mul_obj}} is equivalent\footnote{Here, equivalent means that both problems have the same solution.} to the $n$-th single-objective optimization problem.
\subsection{Optimal Solution}
Problems \ref{prob:dl_mino}-\ref{prob:mul_obj}  are non-convex optimization problems due to the non-convex constraints C3 and C4. To overcome the non-convexity, we recast these problems as SDPs via SDP relaxation. To this end, we define  new variables
\begin{eqnarray}
\mathbf{W}_k&=&\mathbf{w}_k\mathbf{w}_k^H, \mathbf{H}_k\hspace*{-0.5mm}=\hspace*{-0.5mm}\mathbf{h}_k\mathbf{h}_k^H,\,\,
\mathbf{G}_m\hspace*{-0.5mm}=\hspace*{-0.5mm}\mathbf{g}_m\mathbf{g}_m^H,\,\,\\
\mathbf{Z}_m\hspace*{-0.5mm}&=&\hspace*{-0.5mm}\mathbf{z}_m\mathbf{z}_m^H,\,\, \mbox{ and }
\bm{\Phi}_{j,m}\hspace*{-0.5mm}=\hspace*{-0.5mm}\bm{\phi}_{j,m}\bm{\phi}_{j,m}^H.
\end{eqnarray}
Thus, the considered problems can be equivalently transformed as follows:
\vspace*{2mm}
\begin{T-Prob}\label{prob:dl_mino_tf}
\end{T-Prob}\vspace*{-2mm}
\begin{eqnarray}
&&\underset{\underline{\mathbf{W}},\mathbf{Q}\in\mathbb{H}^{N},{\cal P}}{\mino}\,\, \Tr\Big(\sum_{k=1}^K\mathbf{W}_k+\mathbf{Q}\Big)\notag\\
\mathrm{s.t.}&&\hspace*{-5mm}\mathrm{C2, C6, C7,}\notag\\
&&\hspace*{-5mm}\overline{\mathrm{C1}}:\Tr\big(\sum_{k=1}^K\mathbf{W}_k+\mathbf{Q}\big)\leq P_{\mathrm{max}},\notag\\
&&\hspace*{-5mm}\overline{\mathrm{C3}}:\frac{\Tr(\mathbf{H}_k\mathbf{W}_k)}{\Gamma_{\mathrm{req},k}^\mathrm{DL}}\ge I_k^\mathrm{DL}+\sigma_{\mathrm{DL},k}^2,\,\,\forall k,\notag\\
&&\hspace*{-5mm}\overline{\mathrm{C4}}:\frac{P_m\Tr\big(\mathbf{G}_m\mathbf{Z}_m\big)}{\Gamma_{\mathrm{req},m}^\mathrm{UL}}\ge I_m^\mathrm{UL}+\sigma_\mathrm{UL}^2\Tr(\mathbf{Z}_m),\,\,\forall m,\notag\\
&&\hspace*{-5mm}\overline{\mathrm{C5}}:\overline{P_j^\mathrm{EH}}\ge P_{\mathrm{min},j},\,\,\forall j,\notag\\
&&\hspace*{-5mm}\mathrm{C8}:\mathbf{W}_k\succeq\mathbf{0},\,\,\forall k,\quad\mathrm{C9}:\mathrm{Rank}(\mathbf{W}_k)\leq1,\,\,\forall k,
\end{eqnarray}
where
\begin{eqnarray}
\hspace*{-12mm}I_k^\mathrm{DL}\hspace*{-2mm}&=&\hspace*{-2mm}\sum_{i\neq k}^K\Tr(\mathbf{H}_k\mathbf{W}_i)+\sum_{m=1}^M P_m\abs{f_{m,k}}^2,\\
\hspace*{-12mm}I_m^\mathrm{UL}\hspace*{-2mm}&=&\hspace*{-2mm}\sum_{i\neq m}^M\hspace*{-1mm}P_i\hspace*{-0.5mm}\Tr(\hspace*{-0.5mm}\mathbf{G}_i\mathbf{Z}_m\hspace*{-0.5mm})\notag\\
&+&\hspace*{-1mm}\varrho\Tr\Big(\mathbf{Z}_m\diag\big(\mathbf{H}_\mathrm{SI}\big(\sum_{k=1}^K\mathbf{W}_k
+\mathbf{Q}\big)\mathbf{H}_\mathrm{SI}^H\big)\hspace*{-0.8mm}\Big),\\
\hspace*{-12mm}\overline{P_j^\mathrm{EH}}\hspace*{-2mm}&=&\hspace*{-2mm}\eta_j\Big[\hspace*{-0.5mm}\Tr\hspace*{-0.5mm}\Big(\hspace*{-0.5mm}\bm{\Omega}_j^H\hspace*{-0.5mm}\big(\hspace*{-0.5mm}\sum_{i=k}^K\hspace*{-0.5mm}\mathbf{W}_k\hspace*{-1mm}
+\hspace*{-1mm}\mathbf{Q}\big)\bm{\Omega}_j\hspace*{-1mm}\Big)\hspace*{-1mm}
+\hspace*{-2mm}\sum_{m=1}^M\hspace*{-1mm}P_m\hspace*{-1mm}\Tr(\bm{\Phi}_{j,m}\hspace*{-0.5mm})\Big],
\end{eqnarray} and $\underline{\mathbf{W}}=\{\mathbf{W}_k,\forall k\}$ is the set of downlink beamforming matrices to be optimized. Constraints C8, C9, and $\mathbf{W}_k\in\mathbb{H}^{N}$ are introduced due to the definition of  $\mathbf{W}_k$. Similarly, Problems 2-4 are equivalently transformed to

\begin{T-Prob}\label{prob:ul_mino_tf}\end{T-Prob}
\vspace*{-3mm}
\begin{eqnarray}
\underset{\underline{\mathbf{W}},\mathbf{Q}\in\mathbb{H}^{N},{\cal P}}{\mino}\,\,\sum_{m=1}^MP_m\notag\\
\mathrm{s.t.}\quad\overline{\mathrm{C1}}-\mathrm{C9}.
\end{eqnarray}

\begin{T-Prob}\label{prob:eh_maxo_tf}\end{T-Prob}
\vspace*{-3mm}
\begin{eqnarray}
\underset{\underline{\mathbf{W}},\mathbf{Q}\in\mathbb{H}^{N},{\cal P}}{\maxo}\,\,\sum_{j=1}^J\overline{P_j^\mathrm{EH}}\notag\\
\mathrm{s.t.}\quad\overline{\mathrm{C1}}-\mathrm{C9}.
\end{eqnarray}

\begin{T-Prob}\label{prob:mul_obj_tf}
\vspace*{-1mm}
\begin{eqnarray}\label{eqn:mul_obj_tf}
&&\hspace*{-12mm}\underset{\underline{\mathbf{W}},\mathbf{Q}\in\mathbb{H}^{N},{\cal P},\tau}{\maxo}\quad\tau\notag\\
\mathrm{s.t.}\hspace*{-1mm}&&\hspace*{-4mm}\overline{\mathrm{C1}}-\mathrm{C9},\notag\\
&&\hspace*{-4mm}\mathrm{C10}:\lambda_n\Big(F_n(\mathbf{Q},\underline{\mathbf{W}},{\cal P})\hspace*{-1mm}-\hspace*{-1mm}F_n^*\Big)\leq\tau,\,\,n\hspace*{-1mm}\in\hspace*{-1mm}\{1,\hspace*{-0.5mm}2,\hspace*{-0.5mm}3\},
\end{eqnarray}
\end{T-Prob}
where $\tau$ is an an auxiliary optimization variable \cite{book:convex}.

Evidently, {Transformed Problem \ref{prob:mul_obj_tf}} is a generalization of {Transformed Problems \ref{prob:dl_mino_tf}-\ref{prob:eh_maxo_tf}}. Hence, we focus on the methodology for solving {Transformed Problem \ref{prob:mul_obj_tf}} in the following. {Transformed Problem \ref{prob:mul_obj_tf}} is non-convex due to the rank-one matrix constraint C9. To obtain a tractable problem formulation, we apply SDP relaxation. Specifically, we relax constraint C9 in (\ref{eqn:mul_obj_tf}) by removing it from the problem. Then, the considered problem becomes
\begin{eqnarray}\label{eqn:mul_obj_sdp}
&&\hspace*{-12mm}\underset{\underline{\mathbf{W}},\mathbf{Q}\in\mathbb{H}^{N},{\cal P},\tau}{\maxo}\quad\tau\notag\\
\mathrm{s.t.}&&\hspace*{-4mm}\overline{\mathrm{C1}}-\mathrm{C8},\notag\\
&&\hspace*{-4mm}\mathrm{C10}:\lambda_n\Big(F_n(\mathbf{Q},\underline{\mathbf{W}},{\cal P})\hspace*{-1mm}-\hspace*{-1mm}F_n^*\Big)\leq\tau,\,\,n\hspace*{-1mm}\in\hspace*{-1mm}\{1,\hspace*{-0.5mm}2,\hspace*{-0.5mm}3\}.
\end{eqnarray}
We note that the rank constraint relaxed problem in (\ref{eqn:mul_obj_sdp}) is a  convex SDP which can be solved by standard numerical convex program solvers such as CVX \cite{website:CVX}. In particular, if the obtained solution of the relaxed problem satisfies the rank-one constraint, i.e., $\Rank(\mathbf{W}_k^*)\leq 1$, then the solution of \eqref{eqn:mul_obj_sdp} is the optimal solution of the original {Problem \ref{prob:mul_obj}}. Thus, the optimal beamforming vector $\mathbf{w}_k^*$ of the original problem can be retrieved by solving the relaxed problem. Now, we reveal the tightness of the SDP relaxation by the following theorem.
\begin{Thm}\label{thm:rankone}
Assuming that the channels $\bm{\Omega}_j$, $\mathbf{H}_{\mathrm{SI}}$, and $\mathbf{h}_k$, are  statistically independent,
the optimal beamforming matrix for (\ref{eqn:mul_obj_sdp}) is a rank-one matrix, i.e.,  $\Rank(\mathbf{W}_k^*)=1,\forall k$, and the energy beamforming matrix satisfies $\Rank(\mathbf{Q}^*)\leq1$ with probability one for $\Gamma_{\mathrm{req}_k}^{\mathrm{DL}}>0$.
\end{Thm}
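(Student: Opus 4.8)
\emph{Proof sketch.} The plan is to establish both claims through the Karush--Kuhn--Tucker (KKT) conditions of the relaxed problem \eqref{eqn:mul_obj_sdp}. First I would observe that \eqref{eqn:mul_obj_sdp} is a convex SDP which, whenever {Problem \ref{prob:mul_obj}} is feasible, admits a strictly feasible point, so Slater's condition holds and the KKT conditions are necessary and sufficient for optimality. Let $\mathcal{L}$ be the Lagrangian, with scalar multipliers $\alpha\ge 0$, $\beta_k\ge 0$, $\gamma_m\ge 0$, $\delta_j\ge 0$, $\nu_n\ge 0$ associated with $\overline{\mathrm{C1}}$, $\overline{\mathrm{C3}}$, $\overline{\mathrm{C4}}$, $\overline{\mathrm{C5}}$, $\mathrm{C10}$, respectively, and positive semidefinite matrix multipliers $\mathbf{Y}_k\succeq\mathbf{0}$ and $\mathbf{Z}_{\mathbf{Q}}\succeq\mathbf{0}$ associated with $\mathbf{W}_k\succeq\mathbf{0}$ ($\mathrm{C8}$) and $\mathbf{Q}\succeq\mathbf{0}$ ($\mathrm{C7}$). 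The conditions I would use are the complementary slackness relations $\mathbf{Y}_k\mathbf{W}_k^*=\mathbf{0}$, $\mathbf{Z}_{\mathbf{Q}}\mathbf{Q}^*=\mathbf{0}$ and the stationarity conditions $\nabla_{\mathbf{W}_k}\mathcal{L}=\mathbf{0}$, $\nabla_{\mathbf{Q}}\mathcal{L}=\mathbf{0}$.

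Evaluating $\nabla_{\mathbf{W}_k}\mathcal{L}=\mathbf{0}$ and rearranging gives $\mathbf{Y}_k=\mathbf{A}_k-\beta_k\mathbf{H}_k/\Gamma_{\mathrm{req},k}^{\mathrm{DL}}$, where $\mathbf{A}_k$ collects every remaining Hermitian contribution of $\mathbf{W}_k$: an identity term $(\alpha+\nu_1\lambda_1)\mathbf{I}_N$ from the downlink-power objective and the power cap, the terms $\sum_{i\neq k}\beta_i\mathbf{H}_i/\Gamma_{\mathrm{req},i}^{\mathrm{DL}}$ from the downlink SINR constraints of the other users, the self-interference term $\varrho\sum_m\gamma_m\mathbf{H}_{\mathrm{SI}}^H\diag(\mathbf{Z}_m)\mathbf{H}_{\mathrm{SI}}$ entering $\overline{\mathrm{C4}}$, and the negative energy-harvesting term $-\sum_j(\nu_3\lambda_3+\delta_j)\eta_j\bm{\Omega}_j\bm{\Omega}_j^H$. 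Two facts are then central. First, even though $\mathbf{A}_k$ contains a negative term, rewriting $\mathbf{A}_k=\mathbf{Y}_k+\beta_k\mathbf{H}_k/\Gamma_{\mathrm{req},k}^{\mathrm{DL}}$ exhibits it as a sum of positive semidefinite matrices, hence $\mathbf{A}_k\succeq\mathbf{0}$. Second, $\mathbf{A}_k$ is in fact positive definite with probability one: since the channels $\bm{\Omega}_j$, $\mathbf{H}_{\mathrm{SI}}$, and $\mathbf{h}_k$ are statistically independent and drawn from continuous distributions, the event that the positive semidefinite matrix $\mathbf{A}_k$ is rank deficient has measure zero.

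Given $\mathbf{A}_k\succ\mathbf{0}$, the rank statement follows quickly. If $\mathbf{Y}_k\mathbf{v}=\mathbf{0}$ then $\mathbf{A}_k\mathbf{v}=(\beta_k/\Gamma_{\mathrm{req},k}^{\mathrm{DL}})(\mathbf{h}_k^H\mathbf{v})\mathbf{h}_k$, so $\mathbf{v}$ is a scalar multiple of $\mathbf{A}_k^{-1}\mathbf{h}_k$; therefore $\dim\nullspace(\mathbf{Y}_k)\le 1$, i.e., $\Rank(\mathbf{Y}_k)\ge N-1$, and the complementary slackness condition $\mathbf{Y}_k\mathbf{W}_k^*=\mathbf{0}$ forces $\Rank(\mathbf{W}_k^*)\le 1$. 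Because $\Gamma_{\mathrm{req},k}^{\mathrm{DL}}>0$, constraint $\overline{\mathrm{C3}}$ gives $\Tr(\mathbf{H}_k\mathbf{W}_k^*)\ge\Gamma_{\mathrm{req},k}^{\mathrm{DL}}\sigma_{\mathrm{DL},k}^2>0$, so $\mathbf{W}_k^*\neq\mathbf{0}$, and hence $\Rank(\mathbf{W}_k^*)=1$; consequently $\mathrm{C9}$ is automatically satisfied, the solution of \eqref{eqn:mul_obj_sdp} solves {Transformed Problem \ref{prob:mul_obj_tf}} (hence {Problem \ref{prob:mul_obj}}), and $\mathbf{w}_k^*$ is recovered from the eigenvalue decomposition of $\mathbf{W}_k^*$. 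The same scheme applied to $\nabla_{\mathbf{Q}}\mathcal{L}=\mathbf{0}$ yields $\mathbf{Z}_{\mathbf{Q}}=\mathbf{M}-\sum_j(\nu_3\lambda_3+\delta_j)\eta_j\bm{\Omega}_j\bm{\Omega}_j^H$ with $\mathbf{M}\succeq\mathbf{0}$, where no rank-one term is subtracted because the energy signal does not enter $\overline{\mathrm{C3}}$ (it is pre-cancelled at the downlink users); the genericity argument again gives $\Rank(\mathbf{Z}_{\mathbf{Q}})\ge N-1$ with probability one, so $\mathbf{Z}_{\mathbf{Q}}\mathbf{Q}^*=\mathbf{0}$ implies $\Rank(\mathbf{Q}^*)\le 1$, with $\mathbf{Q}^*=\mathbf{0}$ not excluded.

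The step I expect to be the genuine obstacle is the genericity claim, i.e., turning the independence hypothesis into a rigorous ``with probability one'' statement that $\mathbf{A}_k$ (and likewise $\mathbf{Z}_{\mathbf{Q}}$) does not lose rank: these matrices depend on the optimal dual variables and on the zero-forcing receive vectors $\mathbf{z}_m$, all of which are implicit functions of the channel realizations, so a careful measure-theoretic argument is required. The remaining steps are routine KKT manipulations.
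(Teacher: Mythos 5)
Your overall route is the paper's route: form the Lagrangian, use stationarity to express the dual matrices of C8 and C7 in closed form, and combine positive semidefiniteness with complementary slackness to bound the ranks of $\mathbf{W}_k^*$ and $\mathbf{Q}^*$. For the information beams your argument is essentially the paper's in different packaging: your $\mathbf{A}_k$ is the paper's $\mathbf{Y}^*+\sum_{i\neq k}\beta_i^*\mathbf{H}_i$ (note the cross terms enter through $I_i^{\mathrm{DL}}$ and therefore carry no $1/\Gamma_{\mathrm{req},i}^{\mathrm{DL}}$ factor), and both you and the paper ultimately invoke, without a full measure-theoretic justification, the genericity claim that this matrix is nonsingular with probability one; you are right to flag that as the soft spot.

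There is, however, a genuine gap in your treatment of $\mathbf{Q}^*$. You write $\mathbf{Z}_{\mathbf{Q}}=\mathbf{M}-\sum_j(\nu_3\lambda_3+\delta_j)\eta_j\bm{\Omega}_j\bm{\Omega}_j^H$ with $\mathbf{M}\succeq\mathbf{0}$ and assert that ``the genericity argument again'' gives $\Rank(\mathbf{Z}_{\mathbf{Q}})\ge N-1$. The mechanism that worked for $\mathbf{W}_k$ was that the subtracted matrix $\beta_k\mathbf{H}_k/\Gamma_{\mathrm{req},k}^{\mathrm{DL}}$ has rank one, so the nullity of the difference is at most one. Here the subtracted matrix has rank up to $\min(N,JN^{\mathrm{EH}})$ with $N^{\mathrm{EH}}>1$, so that mechanism does not transfer, and a decomposition ``PSD minus PSD'' alone bounds nothing (take $\mathbf{M}$ equal to the subtracted term and $\mathbf{Z}_{\mathbf{Q}}$ vanishes identically). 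The paper closes this with a different observation that your sketch is missing: the stationarity condition has the specific form $\mathbf{Y}^*=(\alpha^*+\rho_1^*\lambda_1)\mathbf{I}-\mathbf{V}$, where $\mathbf{V}$ collects the harvesting and self-interference terms. Then $\mathbf{Y}^*\succeq\mathbf{0}$ forces $\alpha^*+\rho_1^*\lambda_1\ge\xi_{\max}(\mathbf{V})$; if the inequality is strict, $\mathbf{Y}^*\succ\mathbf{0}$ and $\mathbf{Q}^*=\mathbf{0}$, while if it holds with equality the null space of $\mathbf{Y}^*$ is the eigenspace of $\mathbf{V}$ associated with $\xi_{\max}$, which is one-dimensional with probability one because the top eigenvalue of $\mathbf{V}$ is simple for generic channels. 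That scaled-identity-minus-$\mathbf{V}$ structure, not low rank of the subtracted term, is what delivers $\Rank(\mathbf{Q}^*)\le1$; you need to add it for the first claim of the theorem to be proved.
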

\begin{proof}
Please refer to the Appendix.
\end{proof}

In other words, whenever the  channels satisfy  the general condition
stated in Theorem
\ref{thm:rankone}, the adopted SDP relaxation is tight. Hence, the optimal solution of the original MOOP can be obtained by solving the relaxed SDP problem in (\ref{eqn:mul_obj_sdp}). Besides, information beamforming, i.e., $\Rank(\mathbf{W}_k^*)=1$, and energy beamforming, i.e., $\Rank(\mathbf{Q}^*)\leq1$, is optimal for optimizing the considered conflicting objective functions. On the other hand, the optimal solutions of the single-objective problems can be achieved by solving special cases of (\ref{eqn:mul_obj_sdp}). For instance,  the solution of single-objective {Problem \ref{prob:dl_mino}} can be obtained by solving (\ref{eqn:mul_obj_sdp}) with  $\lambda_1=1$, $\lambda_2=0$, and $\lambda_3=0$.

\section{Results}\label{sect:simulation}
In this section, we investigate the performance of the proposed multi-objective resource allocation algorithm. The important simulation parameters are summarized in Table \ref{table:parameters}. We evaluate a system with an FD radio BS located at the center of a cell. Furthermore, $K=3$ downlink users and $M=8$ uplink users located in the range between the reference distance of $10$ meters and the maximum distance of $50$ meters. $J=2$ energy harvesters are located close to the FD BS at a distance of between $2$ to $10$ meters in order to facilitate EH. Each energy harvester is equipped with $N^\mathrm{EH}=3$ antennas. The small scale fading of the uplink and downlink channels is modeled as independent and identically distributed Rayleigh fading. The EH channel and the SI channel are modeled as Rician fading channels with Rician factor $6$ dB. The maximum transmit power supply in downlink and uplink are  $P_\mathrm{max}^\mathrm{DL}=46$ dBm and $P_{\mathrm{max},m}^\mathrm{UL}=23$ dBm, $\forall m$, respectively. Without loss of generality, we assume that the required SINRs at all downlink users are identical. Besides, we specify $\Gamma_{\mathrm{req},m}^\mathrm{UL}=15$ dBm, $\forall m$, for uplink users. At the energy harvesters, the minimum required harvested energy is $P_{\mathrm{min},j}=-20$ dBm, $\forall j$.
\begin{table}[t]
\centering
\caption{Simulation Parameters.} \label{table:parameters}
\begin{tabular}{ | l | l | } \hline
      Carrier center frequency                           & 915 MHz\\ \hline
      Bandwidth                                          & 200 kHz \\ \hline 
      Antenna gain at FD BS                                & 10 dBi \\ \hline
            Antennas gain at users                               & 0 dBi \\ \hline
      Downlink user noise power                          & -71 dBm \\ \hline
      BS noise power                            & -83 dBm \\ \hline
      SI cancellation constant $\varrho$                                  & -110 dB \\ \hline
      Energy conversion efficiency, $\eta_j$                      & 0.8 \\ \hline
\end{tabular}
\end{table}

\subsection{Trade-off Region of Multiple Design Objectives}
Figure \ref{fig:trade_off_3D} depicts the trade-off region for uplink transmit power minimization, downlink transmit power minimization, and total harvested energy maximization achieved by the proposed optimal scheme.  There are $N=8$ transmit antennas at the BS and the minimum required downlink SINR is $\Gamma_{\mathrm{req},k}^\mathrm{DL}=21$ dBm, $\forall k$. The points shown for the trade-off region were obtained by solving the SDP relaxed problem for different sets of weights $0\leq\lambda_n\leq1, n=1,2,3$ subject to $\sum_n\lambda_n=1$. As can be seen, there is a nontrivial trade-off between uplink and downlink transmit power minimization and total harvested energy maximization. In particular, for a fixed weight $\lambda_3$ for EH maximization, the downlink transmit power monotonically decreases for an increasing uplink transmit power which suggests that downlink transmit power minimization and uplink transmit power minimization  conflict with each other. On the other hand, the objective of total harvested energy maximization does not align with the objectives of uplink and downlink transmit power minimization. It can be seen that the amount of harvested energy can only be increased by transmitting with higher uplink and/or downlink transmit power. In particular, the resource allocation policy maximizes the harvested energy using the maximum downlink and uplink transmit power allowances, which corresponds to the top corner point in Figure \ref{fig:trade_off_3D}. In fact, this is the optimal solution of single objective optimization {Problem 3} which can be found by solving (\ref{eqn:mul_obj_sdp}) with $\lambda_1=\lambda_2=0$ and $\lambda_3=1$. Similarly, the other two extreme points in the left and right corners correspond to the solutions of single-objective {Problems} 1 and {2}, which are obtained from the extreme cases of (\ref{eqn:mul_obj_sdp}) for $\lambda_1=1$ and $\lambda_2=1$, respectively.
\begin{figure}[t]
 \centering
\includegraphics[width=3.5in]{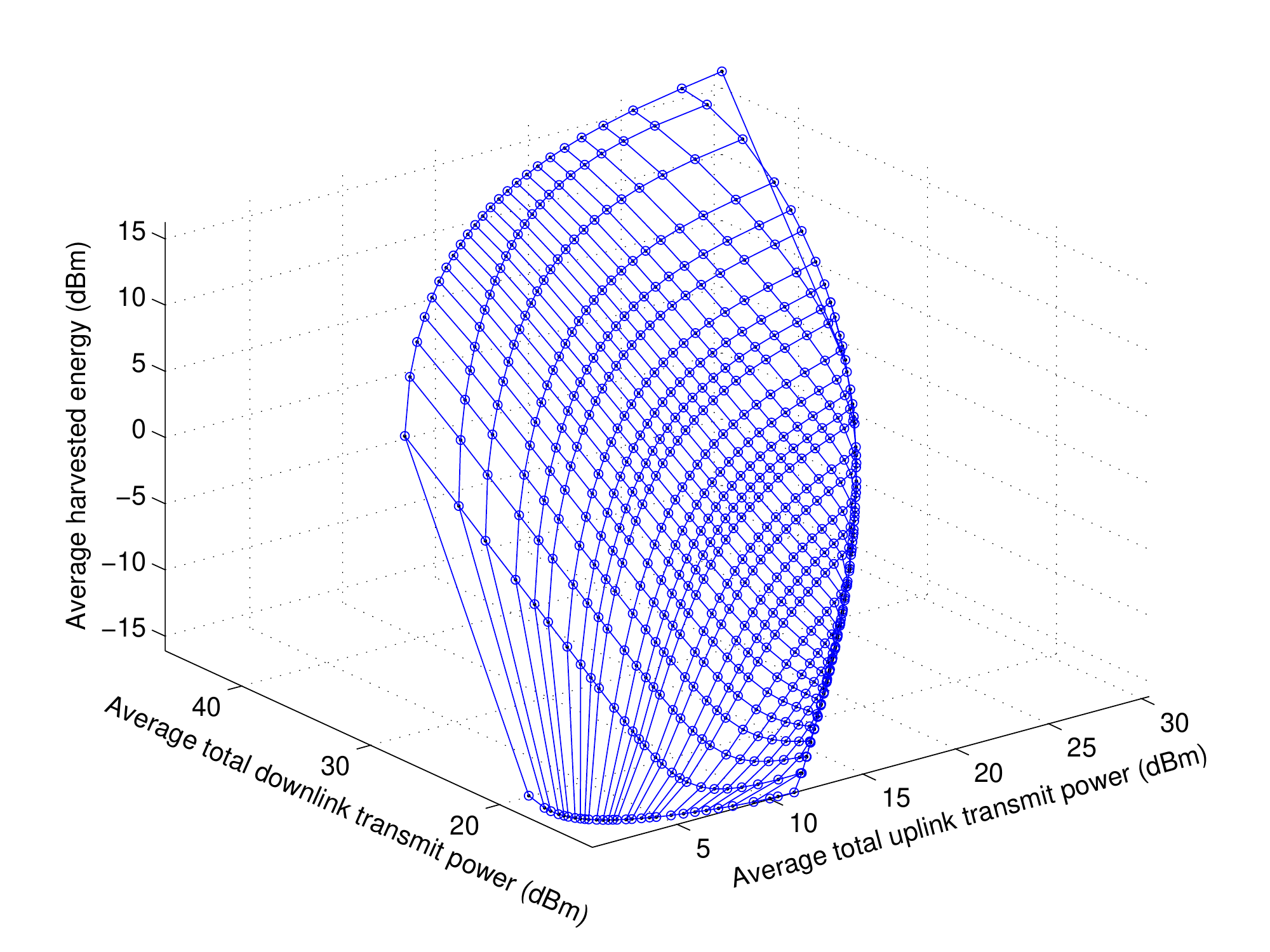}
 \caption{Trade-off region between uplink transmit power minimization, downlink transmit power minimization, and total harvested energy maximization for $N=8$. }
 \label{fig:trade_off_3D}
\end{figure}

\subsection{Average Uplink and Downlink Transmit Powers}
In Figure \ref{fig:fd_hd}, we show the trade-off between uplink and downlink transmit power minimization for different minimum required downlink SINRs, $\Gamma_{\mathrm{req},k}^\mathrm{DL}$. In particular, we select resource allocation policies with $\lambda_3=0$ and $\lambda_1+\lambda_2=1$. The points are obtained by solving (\ref{eqn:mul_obj_sdp}) for different pairs of $\lambda_1$ and $\lambda_2$. For comparison, we adopt a baseline scheme based on HD communication, where a HD radio BS is employed for transmission and reception in alternating time slots. In other words, for a given time interval, the required data rates for  uplink and downlink transmissions in each HD slot are given by $\mathrm{Rate}^\mathrm{HD-UL}_m=2\log(1+\Gamma_{\mathrm{req},m}^\mathrm{UL}), \forall m$, and $\mathrm{Rate}^\mathrm{HD-DL}_k=2\log(1+\Gamma_{\mathrm{req},k}^\mathrm{DL}), \forall k$, respectively. Thus, the required uplink and downlink SINRs in HD transmission are given by $\Gamma_{\mathrm{req},m}^\mathrm{HD-UL}=(1+\Gamma_{\mathrm{req},m}^\mathrm{UL})^2-1$ and $\Gamma_{\mathrm{req},k}^\mathrm{HD-DL}=(1+\Gamma_{\mathrm{req},k}^\mathrm{DL})^2-1$, respectively. Additionally, both SI and CCI can be avoided in the HD scenario. The baseline scheme is designed to achieve the optimal trade-off between the three considered objectives in a HD system with identical sets of weights as for the proposed FD algorithm.  In the baseline scheme, we optimize the same variables $\{\mathbf{Q},\underline{\mathbf{w}},{\cal P}\}$ and impose the same QoS requirements as in the FD case, and also apply ZF-BF detection. As shown in Figure \ref{fig:fd_hd}, significant power savings can be achieved by the proposed FD resource allocation algorithm compared to the HD system, as indicated by the double-sided arrows. Furthermore, when the downlink SINR required becomes less stringent, e.g. from $\Gamma_{\mathrm{req},k}^\mathrm{DL}=21$ dB to $\Gamma_{\mathrm{req},k}^\mathrm{DL}=15$ dB,  both the uplink and downlink transmit powers can be reduced simultaneously. This is due to the following two reasons. First, a smaller downlink transmit power is required  to satisfy the less stringent downlink SINR requirements. Second,  the decrease in  downlink transmit power reduces the self-interference impairing the uplink signal reception which improves the uplink transmit power efficiency.
\begin{figure}[t]
 \centering
\includegraphics[width=3.5in]{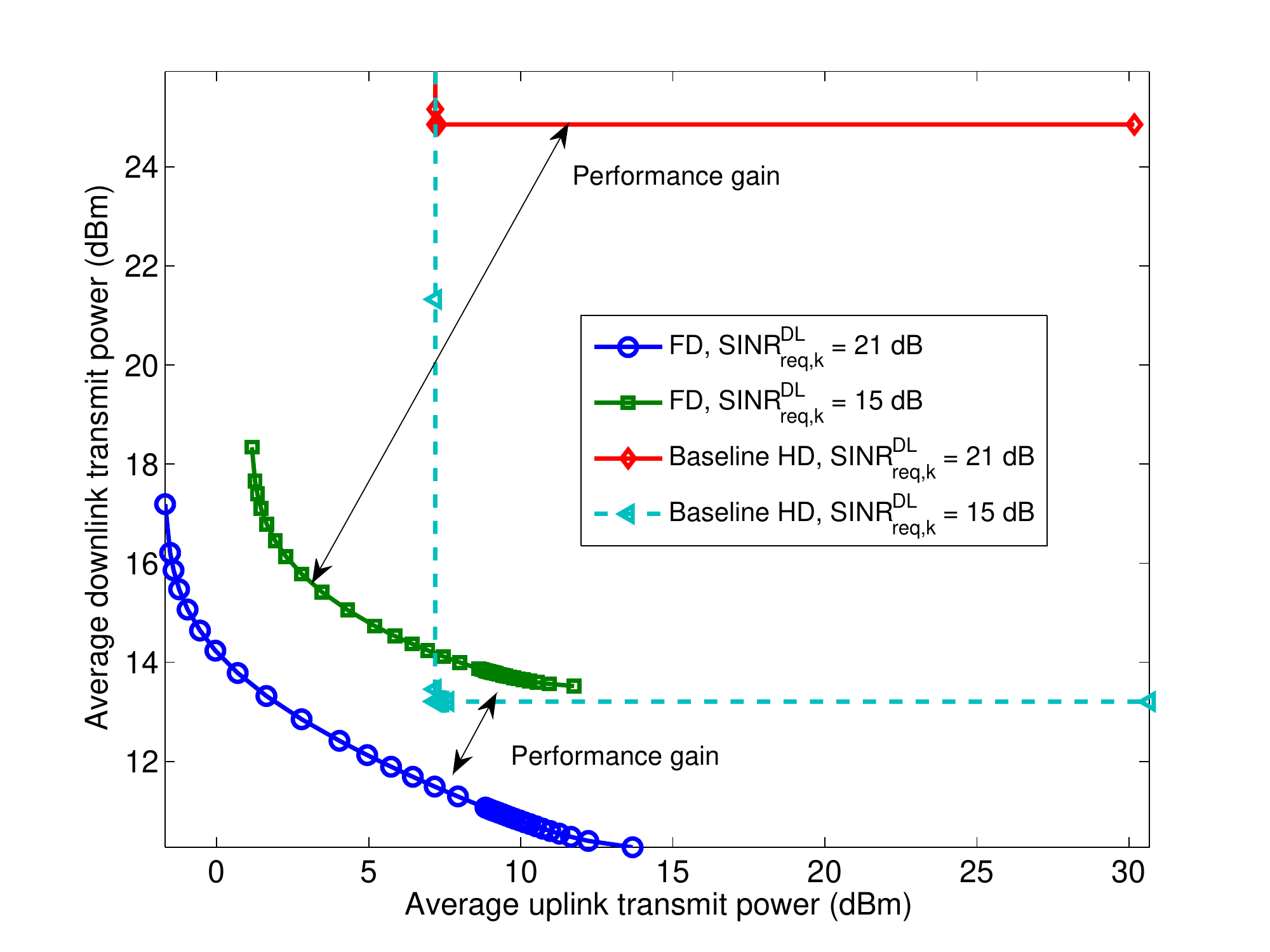}
 \caption{Average  downlink transmit power (dBm) versus average  uplink transmit power (dBm). The double-sided arrows indicate the power saving due to FD communication.}
 \label{fig:fd_hd}
\end{figure}

\subsection{Average Total Harvested Power}
In Figure \ref{fig:dleh}, we show a  trade-off between total harvested power maximization and downlink transmit power minimization. In particular, we select resource allocation policies with $\lambda_2=0$ and $\lambda_1+\lambda_3=1$. The points are obtained by solving (\ref{eqn:mul_obj_sdp}) for different pairs of $\lambda_1$ and $\lambda_3$. Besides, the HD baseline scheme is adopted again for comparison. As can be observed, the proposed FD scheme is able to provide a larger trade-off region compared to the baseline scheme. In particular, although the FD scheme suffers from self-interference, it can facilitate  power-efficient SWIPT via the proposed resource allocation optimization. Besides, a more stringent downlink minimum SINR requirement reduces the size of the trade-off region achieved by the proposed FD communication scheme. In fact, the more stringent downlink minimum SINR requirement  reduces the feasible solution set of optimization problem \eqref{eqn:mul_obj_sdp} which yields a less flexible resource allocation.
\begin{figure}[t]
 \centering
\includegraphics[width=3.5in]{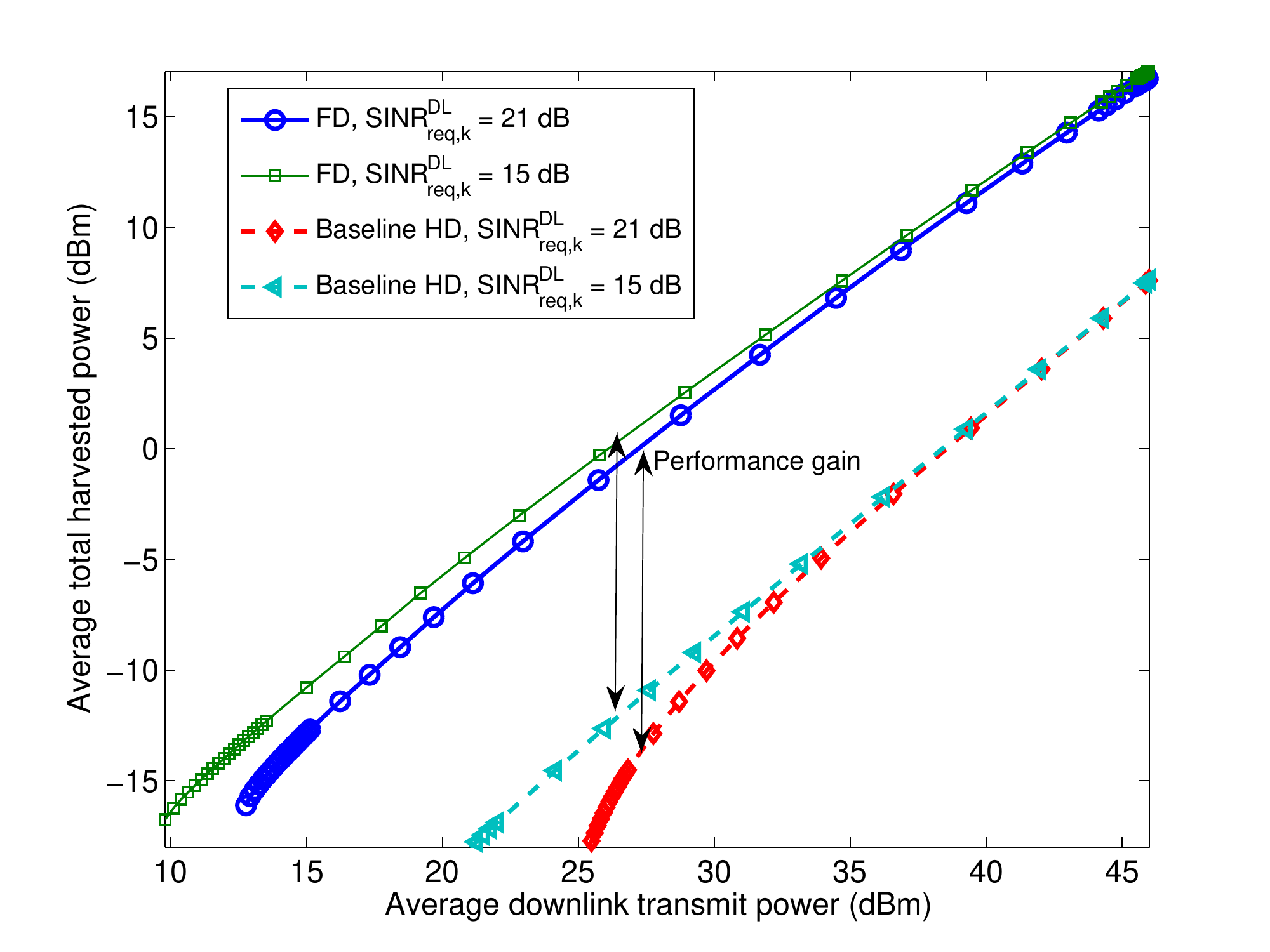}
 \caption{Average total harvested power (dBm) versus the average downlink transmit power (dBm). The double-sided arrows indicate the system performance gain
due to FD communication.}
 \label{fig:dleh}
\end{figure}

\section{Conclusion}\label{sect:conclusion}
In this paper, we designed a resource allocation algorithm for multiuser FD SWIPT systems. We proposed a MOO framework based on the weighted Tchebycheff method to study the trade-off between uplink transmit power minimization, downlink transmit power minimization, and total harvested energy maximization. The non-convex optimization problem was transformed into an equivalent rank-constrained SDP and  solved optimally by SDP relaxation. The proposed algorithm provided a set of resource allocation policies and demonstrated a remarkable performance gain in power consumption compared to a  baseline algorithm employing conventional HD transmission.

\section*{Appendix-Proof of Theorem \ref{thm:rankone}}
Theorem \ref{thm:rankone} can be proved by following a similar approach as in \cite{JR:Rui_MISO_beamforming} via investigating the Karush-Kuhn-Tucker (KKT) optimality conditions of the SDP relaxed problem (\ref{eqn:mul_obj_sdp}). The proof can be divided into two parts. In the first part, we prove that the optimal energy beamforming signal satisfies $\Rank(\mathbf{Q}^*)\leq1$. First of all, we introduce the Lagrangian of the problem as follows
\begin{eqnarray}\label{eqn:lagrangian}
&&\hspace*{-5mm}{\cal L}(\underline{\mathbf{W}},\mathbf{Q},{\cal P},\tau,\alpha,\bm{\beta},\bm{\gamma},\bm{\delta},\bm{\mu},\bm{\nu},\underline{\mathbf{X}},\mathbf{Y},\rho_1,\rho_2,\rho_3)\\
&\hspace*{-5mm}=&\hspace*{-4mm}\tau+\alpha\Big[\hspace*{-0.5mm}\Tr\big(\sum_{k=1}^K\mathbf{W}_k\hspace*{-1mm}+
\hspace*{-1mm}\mathbf{Q}\big)\hspace*{-1mm}-\hspace*{-1mm}P_{\mathrm{max}}^{\mathrm{DL}}\Big]\hspace*{-1mm}+\hspace*{-1mm}\sum_{m=1}^M\nu_m(P_m\hspace*{-1mm}-\hspace*{-1mm}P_{\mathrm{max},m}^\mathrm{UL})\notag\\
&\hspace*{-5mm}-&\hspace*{-5mm}\sum_{k=1}^K\hspace*{-0.5mm}\beta_k\hspace*{-0.5mm}\Big(\frac{\Tr(\mathbf{H}_k\hspace*{-1mm}\mathbf{W}_k)}{\Gamma_{\mathrm{req},k}^\mathrm{DL}}\hspace*{-1mm}-\hspace*{-1mm} I_k^\mathrm{DL}\hspace*{-1mm}-\hspace*{-1mm}\sigma_{\mathrm{DL},k}^2\hspace*{-0.5mm}\Big)\hspace*{-1mm}-\hspace*{-1mm}\sum_{k=1}^K\hspace*{-0.5mm}\Tr(\mathbf{X}_k\hspace*{-1mm}\mathbf{W}_k)\hspace*{-1mm}
-\hspace*{-1mm}\Tr(\mathbf{Y}\mathbf{Q})\notag\\
&\hspace*{-5mm}-&\hspace*{-5mm}\sum_{m=1}^M\hspace*{-0.5mm}\gamma_m\Big(\frac{P_m\hspace*{-0.5mm}\Tr\big(\mathbf{G}_m\mathbf{Z}_m\big)}{\Gamma_{\mathrm{req},m}^\mathrm{UL}}\hspace*{-1mm}-\hspace*{-1mm} I_m^\mathrm{UL}\hspace*{-1mm}-\hspace*{-1mm}\sigma_\mathrm{UL}^2\hspace*{-0.5mm}\Tr(\mathbf{Z}_m)\Big)\hspace*{-1mm}-\hspace*{-1mm}\sum_{m=1}^M\mu_mP_m\notag\\
&\hspace*{-5mm}-&\hspace*{-5mm}\sum_{j=1}^J\delta_j\Big(\overline{P_j^\mathrm{EH}}\hspace*{-1mm}-\hspace*{-1mm}P_{\mathrm{min},j}\Big)\hspace*{-1mm}
+\hspace*{-0.5mm}\rho_1\Big[\lambda_1\Big(\Tr\hspace*{-0.5mm}\big(\hspace*{-0.5mm}\sum_{k=1}^K\hspace*{-0.5mm}\mathbf{W}_k\hspace*{-1mm}
+\hspace*{-1mm}\mathbf{Q}\big)\hspace*{-1mm}-\hspace*{-1mm}F_1^*\Big)\hspace*{-1mm}-\hspace*{-0.5mm}\tau\Big]\notag\\
&\hspace*{-5mm}+&\hspace*{-4.5mm}\rho_2\Big[\lambda_2\Big(\hspace*{-0.5mm}\sum_{m=1}^M\hspace*{-0.5mm}P_m\hspace*{-1mm}-\hspace*{-1mm}F_2^*\Big)\hspace*{-1mm}
-\hspace*{-1mm}\tau\Big]\hspace*{-1mm}+\hspace*{-0.5mm}\rho_3\Big[\lambda_3\Big(\hspace*{-0.5mm}\sum_{j=1}^J\hspace*{-0.5mm}\overline{P_j^\mathrm{EH}}\hspace*{-1mm}-\hspace*{-1mm}F_3^*\Big)\hspace*{-1mm}-\hspace*{-0.5mm}\tau\Big]\notag,
\end{eqnarray}
where $\alpha,\bm{\beta},\bm{\gamma},\bm{\delta},\bm{\mu},\bm{\nu},\underline{\mathbf{X}},\mathbf{Y},\rho_1,\rho_2,$ and $\rho_3$ are dual variables corresponding to the associated constraints.  ${\beta_k},{\gamma_m},{\delta_j},{\mu_m},$ and ${\nu}_m$ are the elements of  dual variables $\bm{\beta},\bm{\gamma},\bm{\delta},\bm{\mu},$ and $\bm{\nu}$, respectively. Since the SDP relaxed problem satisfies Slater's constraint qualification and is convex with respect to the optimization variables, strong duality holds. Denote the optimal primal solution as $\{\underline{\mathbf{W}}^*,\mathbf{Q}^*,{\cal P}^*\}$, and the optimal dual variables as $\{\alpha^*,\bm{\beta}^*,\bm{\gamma}^*,\bm{\delta}^*,\bm{\mu}^*,\bm{\nu}^*,\underline{\mathbf{X}}^*,\mathbf{Y}^*,\rho_1^*,\rho_2^*,\rho_3^*\}$. Then, the KKT conditions used for the proof are given by:
\begin{eqnarray}
\hspace*{-6mm}\mathbf{Y}^*&\hspace*{-3mm}=&\hspace*{-2.5mm}(\alpha^*+\rho_1^*\lambda_1)\mathbf{I}-\mathbf{V},\quad\text{where}\\
\hspace*{-6mm}\mathbf{V}&\hspace*{-3mm}=&\hspace*{-2.5mm}\eta\sum_{j=1}^J(\delta_j^*\hspace*{-1mm}+
\hspace*{-1mm}\rho_3^*\lambda_3)\bm{\Omega}_j\bm{\Omega}_j^H\notag\\
&\hspace*{-3mm}-&\hspace*{-2.5mm}\sum_{m=1}^M\hspace*{-1mm}\gamma_m^*\mathbf{H}_\mathrm{SI}^H\diag(\mathbf{Z}_m)\mathbf{H}_\mathrm{SI},\\
\hspace*{-6mm}\mathbf{X}_k^*&\hspace*{-3mm}=&\hspace*{-2.5mm}\mathbf{Y}^*+\sum_{i\neq k}^K\beta_i^*\mathbf{H}_i-\frac{\beta_k^*}{\Gamma^\mathrm{DL}_{\mathrm{req},k}}\mathbf{H}_k,\,\,\forall k,\label{eqn:X_lagrangian}\\
\hspace*{-6mm}\mathbf{Y}^*\mathbf{Q}^*&\hspace*{-3mm}=&\hspace*{-2.5mm}\mathbf{0},\quad\mathbf{X}_k^*\mathbf{W}_k^*=\mathbf{0},\quad\forall k.\label{eqn:slackness_cond}
\end{eqnarray}
Since we have for the Lagrangian multiplier $\mathbf{Y}^*\succeq\mathbf{0}$, inequality $\alpha^*+\rho_1^*\lambda_1\ge\xi_\mathrm{max}$ must hold, where $\xi_\mathrm{max}$ is the largest eigenvalue of $\mathbf{V}$. If $\alpha^*+\rho_1^*\lambda_1=\xi_\mathrm{max}$, then $\Rank(\mathbf{Y}^*)=N-1$. According to the complementary slackness condition in (\ref{eqn:slackness_cond}), $\mathbf{Q}^*$ lies in the null space spanned by the column vectors of $\mathbf{Y}^*$. Thus, $\Rank(\mathbf{Q}^*)\leq1$. On the other hand, when $\alpha^*+\rho_1^*\lambda_1>\xi_\mathrm{max}$ holds, we have $\Rank(\mathbf{Y}^*)=N$ and $\mathbf{Q}^*=\mathbf{0}$. As a result, $\Rank(\mathbf{Q}^*)\leq1$ must be satisfied. In other words, at most one energy beam is needed to achieve the system design objectives.

Next, we prove the second part, i.e., $\Rank(\mathbf{W}_k^*)=1,\forall k$. It can be verified that $\beta_k^*>0$  for $\Gamma_{\mathrm{req},k}^\mathrm{DL}>0$. Besides, as proved in the first part, we have $\Rank(\mathbf{Y}^*)\ge N-1$. Since all channel variables in the system are statistically independent, $\mathbf{Y}^*$ and $\sum_{i\neq k}^K\beta_i^*\mathbf{H}_i$ span the whole signal space leading to $\Rank(\mathbf{Y}^*+\sum_{i\neq k}^K\beta_i^*\mathbf{H}_i)=N$. Then, based on the basic property of the rank of matrices, we obtain
\begin{eqnarray}
&&\hspace*{-5mm}\Rank(\mathbf{X}_k^*)\hspace*{-0.5mm}+\hspace*{-0.5mm}\Rank(\frac{\beta_k^*}{\Gamma^\mathrm{DL}_{\mathrm{req},k}}
\mathbf{H}_k)\hspace*{-0.5mm}\ge\hspace*{-0.5mm}\Rank(\mathbf{Y}^*\hspace*{-0.5mm}+\hspace*{-0.5mm}\sum_{i\neq k}^K\beta_i^*\mathbf{H}_i)\notag\\
\Longrightarrow&&\hspace*{-5mm}\Rank(\mathbf{X}_k^*)\ge  N-1.
\end{eqnarray}
As $\mathbf{W}_k^*$ lies in the null space of $\mathbf{X}_k^*$ according to (\ref{eqn:slackness_cond}), $\Rank(\mathbf{W}_k^*)\leq1$ holds. Considering $\mathbf{W}_k^*\neq\mathbf{0}$ must hold to fulfill the downlink SINR requirement, we finally conclude that $\Rank(\mathbf{W}_k^*)=1$ holds for the optimal solution. \qed


\end{document}